\newcommand{\dif}{\,\mathrm{d}}		  % Define dif
\newtheorem{theorem}{Theorem}
\newtheorem{lemma}{Lemma}
\newtheorem{definition}{Definition}
\newtheorem{remark}{Remark}
\title{\LARGE \bf Singular Perturbation Approximations for General Linear Quantum 
Systems
}
\author{Shanon L.~Vuglar, Ian R.~Petersen%
%
%TODO: FIX SEIT
\thanks{Shanon L. Vuglar is with the School of Engineering and Information 
Technology, 
        University of New South Wales at the Australian Defence Force Academy, Canberra ACT 2600, Australia.
         {\tt\small shanonvuglar@vuglar.com} }%
\thanks{Ian R. Petersen is with the School of Information Technology and Electrical Engineering, 
        University of New South Wales at the Australian Defence Force Academy, Canberra ACT 2600, Australia.
         {\tt\small i.r.petersen@gmail.com} } }%
\begin{document}

\maketitle
\thispagestyle{empty}
\pagestyle{empty}

\begin{abstract}
This paper considers the use of singular perturbation approximations for 
general linear quantum systems where the system dynamics are described in 
terms of both annihilation and creation operators. Results that are related to 
the physical realizability property of the approximate system are presented. 
\end{abstract}

%%%%%%%%%%%%%%%%%%%%%%%%%%%%%%%%%%%%%%%%%%%%%%%%%%%%%%%%%%%%%%%%%%%%%%%%%%%%%%%%
\section{Introduction} \label{sec:intro}
Quantum feedback control is an active research area addressing the need to 
take into account quantum effects in systems that are inherently quantum in 
nature or when levels of accuracy approach the quantum noise limit. Examples 
where these effects need to be considered occur in quantum optics, quantum 
communications, quantum computing and precision measurement, for example 
gravity wave detection \cite{JNP1,MaP3,MaP4,DP3,GJN10,GJ09,NJD09,NJP1,GGY08,
MAB08,YNJP1,YAM06,ShP5a,PET09A,PET10Ca,GNW12,GN12,GV07,BVS08}.

Recent papers (for example \cite{JNP1,MaP3,MaP4,ShP5a}) use a consistent 
approach to describe quantum systems, 
(both plant and controller) as a combination of quantum harmonic oscillators 
coupled to quantum fields and described by quantum stochastic differential 
equations. The concept of physical realizability has been well defined 
(see \cite{JNP1}) and 
relates to whether a given synthesized system could be implemented as such a 
combination of quantum harmonic oscillators. Equivalent necessary and 
sufficient conditions for physical realizability are given in \cite{JNP1} 
and \cite{ShP5a}. 

Physical realizability is particularly relevant in both modeling and control 
applications. In the case of modeling a plant (for example as part of a 
controller design process), it is often necessary to have a plant model that 
is physically realizable as this can have implications for controller design. 
Similarly, in the case of coherent quantum feedback control where the 
controller is implemented as a quantum system, it is necessary to establish 
whether it is in fact possible to implement a synthesized controller as a 
quantum system.

Examples of components occurring in physical quantum systems that are 
adequately described by this framework include beam splitters, phase-shift 
modulators and optical cavities. In \cite{PET09A}, the physical realizability 
of singular perturbation approximations for a class of linear quantum systems 
is considered. Singular perturbation approximations are closely related to 
adiabatic elimination, a commonly used technique used in modeling quantum 
systems within the physics literature. 
Related results in a more general setting can be found in 
\cite{GV07,BVS08,GNW12,GN12}.
In \cite{PET09A}, the class of quantum linear 
systems that can be describe solely in terms of annihilation operators is 
considered. This class corresponds to passive systems, for example optical 
systems containing only passive components such as optical cavities, 
beam-splitters and phase shifters.

In light of the results obtained in \cite{PET09A}, this paper considers the 
more general class of quantum linear systems described by both annihilation and 
creation operators. Two results are obtained; one for a general singular 
perturbation, and one for a special case in which the Hamiltonian and 
coupling operators are singularly perturbed. In the general case a result 
(relevant to physical realizability) relating to the J-J unitary property of 
the transfer function of the approximate system is obtained. In the special 
case, it is shown that while in general, the system obtained from the 
singular perturbation approximation is not necessarily physically realizable, 
it is equivalent to a physically realizable system in series with a 
static Bogoliubov component (generalized static squeezer).

Components implementing static linear transformations, called 
Bogoliubov transformations (see \cite{GJN10}), such as a static squeezer, as 
encountered in quantum optics, do not belong to the class of physically 
realizable quantum systems, although they can be approximated by systems that 
are. In (\cite{GJN10}) a framework is developed to merge these to classes: 
(i) dynamical components, with linear evolution of physical variables, and 
(ii) static components characterized by Bogoliubov transformations. 
In particular, general methods for cascade, series, and feedback connections 
are provided, input-output maps and transfer functions for representing 
components are defined and the issue of convergence is addressed.

The remainder of the paper proceeds as follows. In Section~\ref{sec:model}, 
we describe the quantum system model used throughout this paper. 
In Section~\ref{sec:result}, we present our two main results related to the 
physical realizability property of approximate systems obtained through the 
use of singular perturbation approximations. The first result is for a 
system obtained via a general singular perturbation, whereas the second 
result is for a special case in which the Hamiltonian and coupling operators 
are singularly perturbed. 
An illustrative example follows in Section~\ref{sec:ex} and our conclusion is 
given in Section~\ref{sec:conc}.

\section{Quantum System Model} \label{sec:model}
To aid the reader, the nomenclature and variables used throughout this paper 
are consistent with their usage in \cite{PET09A}. $(.)^\#$ applied to an 
operator, is its operator adjoint, and applied to a matrix is its 
complex conjugate. $(.)^\dagger = (.)^{\#T}$.  

As in \cite{ShP5a},
we consider the class of linear quantum systems models representing 
$n$ quantum harmonic oscillators coupled to $m$ external independent quantum 
fields. This class of linear quantum systems is described by 
quantum stochastic differential equations (QSDEs) of the form
\begin{eqnarray}
\begin{bmatrix} \dif a(t) \\ \dif a(t)^\# \end{bmatrix}
&=& F \begin{bmatrix} a(t) \\ a(t)^\# \end{bmatrix} dt 
+ G \begin{bmatrix} \dif u(t) \\ \dif u(t)^\# 
	\end{bmatrix}; \nonumber \\
\begin{bmatrix}
	\dif y(t) \\
	\dif y(t)^\# 
\end{bmatrix}
&=& H \begin{bmatrix} a(t) \\ a(t)^\# \end{bmatrix} dt 
+ K \begin{bmatrix}
	\dif u(t) \\
	\dif u(t)^\# 
\end{bmatrix} \label{eqn:genModel}
\end{eqnarray} 
where $a(t) = \begin{bmatrix}a_1(t) & \cdots & a_n(t)\end{bmatrix}^T$
is a column vector of linear combinations of annihilation and creation 
operators corresponding to the harmonic oscillators. The vector $u(t)$ 
represents the input to the system. It is assumed to admit the decomposition 
$\dif u(t) = \beta_u(t) \dif t + \dif \tilde{u}(t)$ where $\tilde{u}(t)$
is the noise part of $u(t)$ (with Ito products
$\dif \tilde{u}(t) \dif \tilde{u}^T(t) = F_{\tilde{u}} \dif t$
where $F_{\tilde{u}}$ is non-negative Hermitian)
and $\beta_u(t)$ is the adapted, self adjoint part of $u(t)$. 
$F$, $G$, $H$, $K$ are of the form:
\begin{eqnarray*}
	F &=&  \begin{bmatrix} F_1 & F_2 \\ F_2^\# & F_1^\# \end{bmatrix}
		\in \mathbb{C}^{2n \times 2n}; \\
	G &=&  \begin{bmatrix} G_1 & G_2 \\ G_2^\# & G_1^\# \end{bmatrix}
		\in \mathbb{C}^{2n \times 2m}; \\
	H &=&  \begin{bmatrix} H_1 & H_2 \\ H_2^\# & H_1^\# \end{bmatrix}
		\in \mathbb{C}^{2m \times 2n}; \\
	K &=&  \begin{bmatrix} K_1 & K_2 \\ K_2^\# & K_1^\# \end{bmatrix}
		\in \mathbb{C}^{2m \times 2m }. \\
\end{eqnarray*}

\begin{definition} (see \cite{ShP5a}) A linear quantum system of the form 
(\ref{eqn:genModel}) is \emph{physically realizable} if there exists 
a commutation matrix $\Theta = \Theta^\dagger = TJT^\dagger \ge 0$, a 
Hamiltonian matrix $M$, coupling matrix $N$ and scattering matrix $S$, with 
\begin{eqnarray*}
	J &=& \begin{bmatrix} I & 0 \\ 0 & -I \end{bmatrix}; \\
	T &=& \begin{bmatrix} T_1 & T_2 \\
		T_2^\# & T_1^\# \end{bmatrix};
	\qquad \mbox{T non-singular}; \\
	M &=& \begin{bmatrix} M_1 & M_2 \\
		M_2^\# & M_1^\# \end{bmatrix}; 
	\qquad M^\dagger = M; \\
	N &=& \begin{bmatrix} N_1 & N_2 \\
		N_2^\# & N_1^\# \end{bmatrix}; 
	\qquad	S^{-1} = S^\dagger ,
\end{eqnarray*} 
such that,
\begin{eqnarray}
	F &=& -i\Theta M - \frac{1}{2} \Theta N^\dagger JN; \nonumber \\
	G &=& - \Theta N^\dagger JK; \nonumber \\
	H &=& N; \nonumber \\
	K &=& \begin{bmatrix} S & 0 \\ 0 & S^\#
	\end{bmatrix}. \label{eqn:realizable}
\end{eqnarray}
\end{definition}

\begin{theorem}
	(see \cite{ShP5a}) Suppose the linear quantum system 
	(\ref{eqn:genModel}) 
	is minimal, and that $\lambda_i(F) + \lambda_j(F) \ne 0$ for all 
	eigenvalues $\lambda_i(F)$, $\lambda_j(F)$ of $F$. Then this 
	linear quantum system is physically realizable if and only if 
	the following conditions hold:
	\begin{enumerate}
		\item The system transfer function matrix 
			$\Gamma(s)$ is (J,J)-unitary. That is, \newline
			$\Gamma^\sim(s)J\Gamma (s) = J$
			for all $s \in \mathbb{C}$; where 
			 $\Gamma^\sim(s) = \Gamma^\dagger(-s^*)$ and $s^*$ is 
			 the complex conjugate of $s$.
		\item $K$ is of the form
			$K = \begin{bmatrix} S & 0 \\ 0 & S^\#	\end{bmatrix}$ 
			where $S^\dagger S = S S^\dagger = I$.
	\end{enumerate}
\end{theorem}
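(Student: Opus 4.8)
\emph{Proof sketch.} Writing the system transfer function as $\Gamma(s)=K+H(sI-F)^{-1}G$ and putting $\Phi=(sI-F)^{-1}$, $\Psi=(-sI-F^\dagger)^{-1}$ (so that $\Gamma^\sim(s)=K^\dagger+G^\dagger\Psi H^\dagger$), the plan is to reduce both implications to the existence of a Hermitian, nonsingular matrix $P$ --- destined to be $\Theta^{-1}$ --- satisfying
\begin{align}
F^\dagger P+PF+H^\dagger J H&=0,\label{eq:kyp1}\\
PG+H^\dagger J K&=0,\label{eq:kyp2}\\
K^\dagger J K&=J.\label{eq:kyp3}
\end{align}
Granting this equivalence, each direction becomes mostly bookkeeping with the ``doubled-up'' block structure shared by $F,G,H,K$.

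\textbf{Necessity.} First I would treat the ``only if'' direction. Assuming (\ref{eqn:realizable}) with $\Theta=TJT^\dagger$ nonsingular, $M^\dagger=M$, $N$, and $S^\dagger S=SS^\dagger=I$, and setting $P=\Theta^{-1}$, a short calculation gives: $K=\diag(S,S^\#)$ yields $K^\dagger JK=J=KJK^\dagger$, i.e.\ (\ref{eq:kyp3}); $F=-i\Theta M-\tfrac12\Theta N^\dagger JN$ with $M^\dagger=M$ yields $F^\dagger P+PF=-N^\dagger JN=-H^\dagger JH$ (since $H=N$), i.e.\ (\ref{eq:kyp1}); and $G=-\Theta N^\dagger JK$ yields $PG=-H^\dagger JK$, i.e.\ (\ref{eq:kyp2}). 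I would then expand
\[
\begin{aligned}
\Gamma^\sim(s)J\Gamma(s)={}&K^\dagger JK+K^\dagger JH\Phi G\\
&{}+G^\dagger\Psi H^\dagger JK+G^\dagger\Psi H^\dagger JH\Phi G
\end{aligned}
\]
and invoke the resolvent identity $\Psi(H^\dagger JH)\Phi=P\Phi+\Psi P$, which is immediate from (\ref{eq:kyp1}), together with $PG=-H^\dagger JK$ and $G^\dagger P=-K^\dagger JH$ from (\ref{eq:kyp2}); every $s$-dependent term then cancels, leaving $\Gamma^\sim(s)J\Gamma(s)=K^\dagger JK=J$. Condition~2 is immediate since $K=\diag(S,S^\#)$ with $S$ unitary.

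\textbf{Sufficiency.} For the converse I would start from Conditions~1 and~2, read off a unitary $S$ and (\ref{eq:kyp3}) from Condition~2, and then --- the essential step --- manufacture a Hermitian $P$ solving (\ref{eq:kyp1})--(\ref{eq:kyp2}) out of the $(J,J)$-unitary property. I would do this by forming a minimal realization of the rational matrix $\Gamma^\sim(s)J\Gamma(s)$, using that it equals the constant $J$, and reading $P$ off the state-space similarity that exhibits the pole/zero cancellation --- the standard state-space characterization of $(J,J)$-unitary transfer functions. Minimality of $(F,G,H)$ and the hypothesis $\lambda_i(F)+\lambda_j(F)\ne0$ make $P$ unique; uniqueness then forces $P$ to share the doubled-up block structure of $F,G,H,K$ (the ``conjugated-and-flipped'' copy of $P$ solves the same equations) and to be nonsingular (its kernel would be an $F$-invariant unobservable subspace). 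After checking that $P^{-1}$ carries the inertia of $J$, so that $\Theta\defeq P^{-1}=TJT^\dagger$ for some nonsingular doubled-up $T$, I would close by setting $N=H$, $M=i(PF+\tfrac12 N^\dagger JN)$, and $S$ as above: then $H=N$ and $K=\diag(S,S^\#)$ hold by construction, $G=-\Theta N^\dagger JK$ is a restatement of (\ref{eq:kyp2}), and $M^\dagger=M$ is equivalent to (\ref{eq:kyp1}), so (\ref{eqn:realizable}) is satisfied and the system is physically realizable.

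\textbf{Main obstacle.} I expect the hard part to be the sufficiency direction --- specifically establishing the existence, uniqueness, doubled-up structure and nonsingularity of $P$, and verifying that $\Theta=P^{-1}$ genuinely factors as $TJT^\dagger$ (the inertia check). This is precisely where minimality of the realization and the eigenvalue-separation condition on $F$ do the essential work; the remainder is routine manipulation of the doubled-up matrices.
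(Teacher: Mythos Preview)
The paper does not supply a proof of this theorem; it is quoted, with the attribution ``(see \cite{ShP5a})'', as a known result from that reference and is used only as background for the later singular-perturbation results. There is therefore no proof in the present paper to compare your attempt against.

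For what it is worth, your sketch is along the lines one expects for such a statement: reduce both directions to a KYP-type triple in a Hermitian $P=\Theta^{-1}$, handle necessity by direct substitution and a resolvent-identity cancellation, and handle sufficiency via the state-space characterization of $(J,J)$-unitary transfer functions, where minimality and the eigenvalue-separation hypothesis on $F$ guarantee existence and uniqueness of $P$. Your identification of the genuinely delicate points --- nonsingularity and doubled-up structure of $P$, and the inertia check that $P^{-1}$ admits a factorization $TJT^\dagger$ with nonsingular doubled-up $T$ --- is accurate; those are exactly the places where the structural hypotheses do real work, and the remainder is bookkeeping.
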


\begin{definition}
	(see \cite{GJN10}) A \emph{static Bogoliubov component} is a 
	component that implements the \emph{Bogoliubov} transformation:
	$$\begin{bmatrix} \dif y(t) \\ \dif y^\#(t) \end{bmatrix} = B 
		\begin{bmatrix} \dif u(t) \\ \dif u^\#(t) \end{bmatrix};$$
	where
	$$B = \begin{bmatrix} B_1 & B_2 \\
	B_2^\# & B_1^\# \end{bmatrix};
	\qquad JB^\dagger J B = B J B^\dagger J = I.$$
\end{definition}

\begin{remark}A static Bogoliubov component is in general not physically 
	realizable in the above sense. However, it is a useful idealization 
	for certain devices used in quantum optics, (e.g. a static squeezer) 
	and is correctly interpreted as a limiting situation \cite{GJN10}.
\end{remark}

\section{Main Result}
\label{sec:result}
\subsection{General Singular Perturbations}
Consider the class of quantum systems of the form (\ref{eqn:genModel}) that 
are dependent on a parameter $\varepsilon \ge 0 $ as follows:
\begin{eqnarray}
\begin{bmatrix} \dif a_1(t) \\ \dif a_2(t) \\
	\dif a_1(t)^\# \\ \dif a_2(t)^\# \end{bmatrix}
&=& 
\underbrace{
\begin{bmatrix}
	F_{1a} & F_{1b} & F_{2a} & F_{2b} \\
	\frac{1}{\varepsilon}
	F_{1c} & 
	\frac{1}{\varepsilon}
	F_{1d} & 
	\frac{1}{\varepsilon}
	F_{2c} & 
	\frac{1}{\varepsilon}
	F_{2d} \\
	F_{1a}^\# & F_{1b}^\# & F_{2a}^\# & F_{2b}^\# \\
	\frac{1}{\varepsilon}
	F_{1c}^\# & 
	\frac{1}{\varepsilon}
	F_{1d}^\# & 
	\frac{1}{\varepsilon}
	F_{2c}^\# & 
	\frac{1}{\varepsilon}
	F_{2b}^\#
\end{bmatrix}
}_{F_\varepsilon}
\begin{bmatrix} a_1(t) \\ a_2(t) \\
	a_1(t)^\# \\ a_2(t)^\# \end{bmatrix}
dt \nonumber \\
&& \qquad +
\underbrace{
\begin{bmatrix}
	G_{1a} & G_{2a} \\
	\frac{1}{\varepsilon}
	G_{1b} & 
	\frac{1}{\varepsilon}
	G_{2b} \\
	G_{2a}^\# & G_{1a}^\# \\
	\frac{1}{\varepsilon}
	G_{2b}^\# & 
	\frac{1}{\varepsilon}
	G_{1b}^\# 
\end{bmatrix}
}_{G_\varepsilon}
\begin{bmatrix}
	\dif u(t) \\
	\dif u(t)^\# 
\end{bmatrix};
\nonumber \\
\begin{bmatrix}
	\dif y(t) \\
	\dif y(t)^\# 
\end{bmatrix}
&=& 
\underbrace{
\begin{bmatrix}
	H_{1a} & H_{1b} & H_{2a} & H_{2b} \\
	H_{1a}^\# & H_{1b}^\# & H_{2a}^\# & H_{2b}^\# \\
\end{bmatrix}}_H
\begin{bmatrix} a_1(t) \\ a_2(t) \\
	a_1(t)^\# \\ a_2(t)^\# \end{bmatrix}
dt \nonumber \\
&& \qquad +
\underbrace{
\begin{bmatrix}
	K_{1} & K_{2} \\
	K_{2}^\# & K_{1}^\# \\
\end{bmatrix}}_K
\begin{bmatrix}
	\dif y(t) \\
	\dif y(t)^\# 
\end{bmatrix}.
\label{eqn:sys}
\end{eqnarray} 

Equivalently, re-ordering partitions for convenience with new matrices as 
labeled,  and re-writing in the more 
standard singularly perturbed form we obtain:
\begin{eqnarray}
\begin{bmatrix} \dif a_1(t) \\
	\dif a_1(t)^\# 
\end{bmatrix}
&=& 
\underbrace{
\begin{bmatrix}
	F_{1a} & F_{2a} \\
	F_{1a}^\# & F_{2a}^\# \\
\end{bmatrix}
}_{F_{a}}
\begin{bmatrix} a_1(t) \\
	a_1(t)^\# 
\end{bmatrix}
dt \nonumber \\
&& \quad +
\underbrace{
\begin{bmatrix}
	F_{1b} & F_{2b} \\
	F_{1b}^\# & F_{2b}^\# \\
\end{bmatrix}
}_{F_{b}}
\begin{bmatrix} a_2(t) \\
	a_2(t)^\# 
\end{bmatrix}
dt \nonumber \\
&& \quad +
\underbrace{
\begin{bmatrix}
	G_{1a} & G_{2a} \\
	G_{2a}^\# & G_{1a}^\# \\
\end{bmatrix}
}_{G_{a}}
\begin{bmatrix}
	\dif u(t) \\
	\dif u(t)^\# 
\end{bmatrix};
\nonumber \\
\varepsilon 
\begin{bmatrix} \dif a_2(t) \\
	\dif a_2(t)^\# 
\end{bmatrix}
&=& 
\underbrace{
\begin{bmatrix}
	F_{1c} & F_{2c} \\
	F_{1c}^\# & F_{2c}^\# \\
\end{bmatrix}
}_{F_{c}}
\begin{bmatrix} a_1(t) \\
	a_1(t)^\# 
\end{bmatrix}
dt \nonumber \\
&& \quad +
\underbrace{
\begin{bmatrix}
	F_{1d} & F_{2d} \\
	F_{1d}^\# & F_{2d}^\# \\
\end{bmatrix}
}_{F_{d}}
\begin{bmatrix} a_2(t) \\
	a_2(t)^\# 
\end{bmatrix}
dt \nonumber \\
&& \quad +
\underbrace{
\begin{bmatrix}
	G_{1b} & G_{2b} \\
	G_{2b}^\# & G_{1b}^\# \\
\end{bmatrix}
}_{G_{b}}
\begin{bmatrix}
	\dif u(t) \\
	\dif u(t)^\# 
\end{bmatrix};
\nonumber \\
\begin{bmatrix}
	\dif y(t) \\
	\dif y(t)^\# 
\end{bmatrix}
&=& 
\underbrace{
\begin{bmatrix}
	H_{1a} & H_{2a} \\
	H_{1a}^\# & H_{2a}^\# \\
\end{bmatrix}
}_{H_{a}}
\begin{bmatrix} a_1(t) \\
	a_1(t)^\# 
\end{bmatrix}
dt \nonumber \\
&& \quad +
\underbrace{
\begin{bmatrix}
	H_{1b} & H_{2b} \\
	H_{1b}^\# & H_{2b}^\# \\
\end{bmatrix}
}_{H_{b}}
\begin{bmatrix} a_2(t) \\
	a_2(t)^\# 
\end{bmatrix}
dt \nonumber \\
&& \quad +
\underbrace{
\begin{bmatrix}
	K_{1} & K_{2} \\
	K_{2}^\# & K_{1}^\# \\
\end{bmatrix}
}_{K}
\begin{bmatrix}
	\dif y(t) \\
	\dif y(t)^\# 
\end{bmatrix}.
	\label{eqn:sysPerterbed}
\end{eqnarray} 

If the matrix $F_d$ is non-singular, it is possible to obtain a corresponding 
reduced dimension slow subsystem which we shall henceforth call the 
\emph{approximate system}, by formally setting $\varepsilon = 0$:
\begin{eqnarray}
\begin{bmatrix} \dif a_1(t) \\
	\dif a_1(t)^\# 
\end{bmatrix}
&=& 
F_0
\begin{bmatrix} a_1(t) \\
	a_1(t)^\# 
\end{bmatrix}
dt +
G_0
\begin{bmatrix}
	\dif u(t) \\
	\dif u(t)^\# 
\end{bmatrix};
\nonumber \\
\begin{bmatrix}
	\dif y(t) \\
	\dif y(t)^\# 
\end{bmatrix}
&=& 
H_0
\begin{bmatrix} a_1(t) \\
	a_1(t)^\# 
\end{bmatrix}
dt +
K_0
\begin{bmatrix}
	\dif u(t) \\
	\dif u(t)^\# 
\end{bmatrix}.
\label{eqn:sysApprox}
\end{eqnarray} 

where:
\begin{eqnarray}
	\label{eqn:F0}
	F_0 &=& F_a - F_b F_d{}^{-1} F_c; \nonumber \\
	G_0 &=& G_a - F_b F_d{}^{-1} G_b; \nonumber \\
	H_0 &=& H_a - H_b F_d{}^{-1} F_c; \nonumber \\
	K_0 &=& K   - H_b F_d{}^{-1} G_b. 
\end{eqnarray}

\begin{theorem}
	If the singularly perturbed linear complex quantum system 
	(\ref{eqn:sysPerterbed}) is physically realizable for all 
	$\varepsilon \ge 0$, and the matrix $F_d$ is non-singular,
	then the corresponding reduced dimension approximate 
	system (\ref{eqn:sysApprox}) 
	has transfer function matrix $\Phi_0(s) = H_0 (sI - F_0)^{-1}G_0 + K_0$ 
	such that $\Phi_0(s)^\dagger J \Phi_0(s) = J$ for all $s \in 
	\mathbb{C}$. That is, it is (J,J)-unitary.
\end{theorem}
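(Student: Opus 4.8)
The plan is to exploit the fact that physical realizability of the full $\varepsilon$-dependent system (\ref{eqn:sysPerterbed}) gives us, by Theorem 1, that its transfer function $\Phi_\varepsilon(s)$ is $(J,J)$-unitary for every $\varepsilon \ge 0$, and then to show that the transfer function $\Phi_0(s)$ of the reduced system (\ref{eqn:sysApprox}) is exactly the pointwise limit $\lim_{\varepsilon \to 0^+} \Phi_\varepsilon(s)$ at each fixed $s \in \mathbb{C}$ (away from the finitely many poles). Since the relation $\Phi_\varepsilon^\dagger(s) J \Phi_\varepsilon(s) = J$ is preserved under such a limit — it is a closed condition, and the limit matrix is finite — the conclusion $\Phi_0^\dagger(s) J \Phi_0(s) = J$ follows immediately. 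So the whole burden is the convergence claim.

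First I would write the transfer function of the singularly perturbed system explicitly. From (\ref{eqn:sysPerterbed}), eliminating $a_2$ in the Laplace domain: the fast equation gives $a_2 = (\varepsilon s I - F_d)^{-1}(F_c a_1 + G_b u)$, and substituting into the slow and output equations yields
\begin{equation*}
\Phi_\varepsilon(s) = \bigl(H_a + H_b(\varepsilon s I - F_d)^{-1} F_c\bigr)\bigl(sI - F_a - F_b(\varepsilon s I - F_d)^{-1}F_c\bigr)^{-1}\bigl(G_a + F_b(\varepsilon s I - F_d)^{-1}G_b\bigr) + K + H_b(\varepsilon s I - F_d)^{-1}G_b.
\end{equation*}
Next, with $F_d$ nonsingular, $(\varepsilon s I - F_d)^{-1} \to -F_d^{-1}$ as $\varepsilon \to 0^+$, uniformly on compact $s$-sets. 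Substituting this limit into the expression above reproduces precisely the matrices $F_0, G_0, H_0, K_0$ of (\ref{eqn:F0}) — e.g. $H_a + H_b(-F_d^{-1})F_c = H_0$, $sI - F_a - F_b(-F_d^{-1})F_c = sI - F_0$, etc. — and hence $\Phi_\varepsilon(s) \to \Phi_0(s)$ at every $s$ that is not an eigenvalue of $F_0$. I would note that the middle inverse is well-defined in the limit because $sI - F_0$ is invertible at such $s$, and by continuity it remains invertible for $\varepsilon$ small enough, so no spurious blow-up occurs.

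The main obstacle is a mild technical one: justifying that $(J,J)$-unitarity of the limit is inherited, and handling the exceptional values of $s$. For a fixed $s$ with $\det(sI - F_0) \ne 0$, both sides of $\Phi_\varepsilon^\dagger(s) J \Phi_\varepsilon(s) = J$ are continuous in $\varepsilon$ near $0$, so taking $\varepsilon \to 0^+$ gives the identity for $\Phi_0(s)$; this establishes it on $\mathbb{C}$ minus a finite set. Finally, since $\Phi_0$ is a rational (hence meromorphic) matrix function and the entries of $\Phi_0^\dagger(s) J \Phi_0(s) - J$ — interpreting $\dagger$ entrywise as the appropriate analytic continuation $\Phi_0^\sim(s) = \Phi_0^\dagger(-s^*)$ — vanish on a set with a limit point, they vanish identically; alternatively one simply observes that $\Phi_0$ has no poles on the relevant set once the reduced system is shown to be well-posed, and continuity extends the identity to all of $\mathbb{C}$ as claimed. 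I would close by remarking that this argument does not by itself certify physical realizability of (\ref{eqn:sysApprox}) — the second condition of Theorem 1 on $K_0$ may fail — which is exactly why only the $(J,J)$-unitary conclusion is asserted here.
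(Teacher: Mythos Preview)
Your proposal is correct and follows essentially the same strategy as the paper: establish that $\Phi_\varepsilon(s)\to\Phi_0(s)$ as $\varepsilon\to 0^+$ and then pass the identity $\Phi_\varepsilon(s)^\dagger J\Phi_\varepsilon(s)=J$ (which physical realizability guarantees for each $\varepsilon>0$) through the limit. The only difference is cosmetic: the paper obtains the convergence by quoting the expansion $\Phi_\varepsilon(s)=\Phi_0(s)+O(\varepsilon)$ from \cite{PET09A}, whereas you derive it directly via the Schur-complement formula and the elementary limit $(\varepsilon sI-F_d)^{-1}\to -F_d^{-1}$; your version is more self-contained and also treats the exceptional values of $s$ more carefully than the paper does.
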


\begin{remark}
	This result is not sufficient to prove the physical realizability of 
	approximate system, as Theorem 2 includes additional conditions 
	(with respect to minimality, eigenvalues, and the $K$ matrix) 
	that need to be met to ensure physical realizabilty. However, 
	these conditions can easily be checked for the approximate system, to 
	verify physical realizability.
\end{remark}

\begin{proof}
	Consider the transfer function of the singularly perturbed 
	system (\ref{eqn:sys}). From \cite{PET09A}:
	\begin{eqnarray}
		\Phi_\varepsilon(s) &=&  
		H(sI-F_\varepsilon)^{-1}G_\varepsilon + K \nonumber \\
		&=& \Phi_0(s) \nonumber \\
		&& {} - \varepsilon s \left(
		H_0(sI - F_0)^{-1} F_b + H_b F_d^{-1} 
		\right) F_d^{-1} \nonumber \\
		&& {}\times \left(
		F_c(sI - F_0)^{-1}G_0 + G_b \right)
		+ O(\varepsilon^2)
		\label{eqn:Phi_epsilon}
	\end{eqnarray}
	with variables defined as previously.

	If the singularly perturbed system (\ref{eqn:sys}) is physically 
	realizabile for all $\varepsilon > 0$, it follows from \cite{ShP5a} 
	that
	$$\Phi_\varepsilon(s)^\dagger J \Phi_\varepsilon(s) = J; \qquad 
	\forall s \in \mathbb{C},$$
	for all $\varepsilon > 0$. Hence, from  
	(\ref{eqn:Phi_epsilon}) if follows that
	$$\Phi_0(s)^\dagger J \Phi_0(s) = J $$ for all $s \in 
	\mathbb{C}$. That is, the approximate system is (J,J)-unitary.
\end{proof}

\subsection{A Special Class of Singular Perterbations}
We now turn our attention to a special class of singularly perturbed 
physically realizable quantum systems of the form (\ref{eqn:sys}) defined 
(as per Definition 1)
in terms of $M$, $N$, $S$ and canonical $\Theta$ as follows:
\begin{eqnarray}
	\Theta &=& J = 
	\begin{bmatrix}
		I & 0 \\ 0 & - I
	\end{bmatrix};
	\nonumber \\
	M &=& 
	\begin{bmatrix}
		M_{1a}    & 
		\frac{1}{\sqrt{\varepsilon}} M_{1b} & 
		M_{2a} &
		\frac{1}{\sqrt{\varepsilon}} M_{2b} \\
		\frac{1}{\sqrt{\varepsilon}} M_{1c}    & 
		\frac{1}{{\varepsilon}} M_{1d} &   
		\frac{1}{\sqrt{\varepsilon}} M_{2c}    & 
		\frac{1}{{\varepsilon}} M_{2d} \\
		M_{2a}^\# & 
		\frac{1}{\sqrt{\varepsilon}} M_{2b}^\# & 
		M_{1a}^\# & 
		\frac{1}{\sqrt{\varepsilon}} M_{1b}^\# \\
		\frac{1}{\sqrt{\varepsilon}} M_{2c}^\# & 
		\frac{1}{{\varepsilon}} M_{2d}^\# & 
		\frac{1}{\sqrt{\varepsilon}} M_{1c}^\# & 
		\frac{1}{{\varepsilon}} M_{1d}^\#
	\end{bmatrix}; \nonumber \\
	M &=& M^\dagger; \nonumber \\
	N &=& \begin{bmatrix}
		N_{1a}    & 
		\frac{1}{\sqrt{\varepsilon}} N_{1b} & 
		N_{2a} &
		\frac{1}{\sqrt{\varepsilon}} N_{2b} \\
		N_{2a}^\# &
		\frac{1}{\sqrt{\varepsilon}} N_{2b}^\# &
		N_{1a}^\#    & 
		\frac{1}{\sqrt{\varepsilon}} N_{1b}^\#  
	\end{bmatrix};
	\nonumber \\
	K &=& \begin{bmatrix} S & 0 \\ 0 & S^\# \end{bmatrix};
	\quad SS^\dagger = I.
	\label{eqn:spec}
\end{eqnarray}

For convenience define 
$M_a = \left[ \begin{smallmatrix} 
	M_{1a} & M_{2a} \\ M_{2a}^\# & M_{1a}^\# 
\end{smallmatrix} \right]$, and likewise for $M_b$, $M_c$, $M_d$, $N_a$ and
$N_b$.

From (\ref{eqn:spec}), we can obtain the system in the form (\ref{eqn:sys}) 
and thence of the form (\ref{eqn:sysPerterbed}). After the change of variables 
$\overline{a}_2(t) := \frac{1}{\sqrt{\varepsilon}}a_2(t)$, we obtain:
\begin{eqnarray}
	\label{eqn:specA}
\begin{bmatrix} \dif a_1(t) \\
	\dif a_1(t)^\# 
\end{bmatrix}
&=& 
\underbrace{
-J(i M_a + \frac{1}{2}N_a{}^\dagger J N_a)
}_{F_{a}}
\begin{bmatrix} a_1(t) \\
	a_1(t)^\# 
\end{bmatrix}
dt \nonumber \\
&& \quad \underbrace{
{} - J(i M_b + \frac{1}{2}N_a{}^\dagger J N_b)
}_{F_{b}}
\begin{bmatrix} \overline{a}_2(t) \\
	\overline{a}_2(t)^\# 
\end{bmatrix}
dt \nonumber \\
&& \quad \underbrace{
{} - J N_a{}^\dagger J K
}_{G_{a}}
\begin{bmatrix}
	\dif u(t) \\
	\dif u(t)^\# 
\end{bmatrix};
\nonumber \\
\varepsilon 
\begin{bmatrix} \dif \overline{a}_2(t) \\
	\dif \overline{a}_2(t)^\# 
\end{bmatrix}
&=& 
\underbrace{
-J(i M_c + \frac{1}{2}N_b{}^\dagger J N_a)
}_{F_{c}}
\begin{bmatrix} a_1(t) \\
	a_1(t)^\# 
\end{bmatrix}
dt \nonumber \\
&& \quad \underbrace{
{}-J(i M_d + \frac{1}{2}N_b{}^\dagger J N_b)
}_{F_{d}}
\begin{bmatrix} \overline{a}_2(t) \\
	\overline{a}_2(t)^\# 
\end{bmatrix}
dt \nonumber \\
&& \quad \underbrace{
{} - J N_b{}^\dagger J K
}_{G_{b}}
\begin{bmatrix}
	\dif u(t) \\
	\dif u(t)^\# 
\end{bmatrix};
\nonumber \\
\begin{bmatrix}
	\dif y(t) \\
	\dif y(t)^\# 
\end{bmatrix}
&=& 
\underbrace{N_a}_{H_{a}}
\begin{bmatrix} a_1(t) \\
	a_1(t)^\# 
\end{bmatrix}
dt +
\underbrace{N_b}_{H_{b}}
\begin{bmatrix} \overline{a}_2(t) \\
	\overline{a}_2(t)^\# 
\end{bmatrix}
dt \nonumber \\
&& \qquad + K
\begin{bmatrix}
	\dif y(t) \\
	\dif y(t)^\# 
\end{bmatrix}.
\nonumber \\
\end{eqnarray} 

Then from (\ref{eqn:F0}), we obtain the approximate system matrices:
\begin{eqnarray}
	F_0 &=& -J(i M_a + \frac{1}{2}N_a{}^\dagger J N_a) 
	+ J(i M_b + \frac{1}{2}N_a{}^\dagger J N_b)
	\nonumber \\ && \quad \times 
	(i M_d + \frac{1}{2}N_b{}^\dagger J N_b)^{-1} 
	(i M_c + \frac{1}{2}N_b{}^\dagger J N_a);
	\nonumber \\
	G_0 &=& - J N_a{}^\dagger J K
	+ J(i M_b + \frac{1}{2}N_a{}^\dagger J N_b)
	\nonumber \\ && \quad \times 
	(i M_d + \frac{1}{2}N_b{}^\dagger J N_b)^{-1} 
	N_b{}^\dagger J K;
	\nonumber \\
	H_0 &=& N_a 
	- N_b 
	(i M_d + \frac{1}{2}N_b{}^\dagger J N_b)^{-1} 
	(i M_c + \frac{1}{2}N_b{}^\dagger J N_a);
	\nonumber \\
	K_0 &=& K - N_b 
	(i M_d + \frac{1}{2}N_b{}^\dagger J N_b)^{-1} 
	N_b{}^\dagger J K.
	\label{eqn:specApprox}
\end{eqnarray}

\begin{theorem}
	Suppose the singularly perterbed system (\ref{eqn:spec}) is physically 
	realizable for all $\varepsilon > 0$. Then the approximate system 
	(\ref{eqn:specApprox}) is equivalent to a physically realizable system 
	connected in series with a static Bogoliubov component (static 
	squeezer).
\end{theorem}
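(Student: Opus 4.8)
The plan is to derive this from the preceding theorem together with the physical realizability characterization of~\cite{ShP5a}, by ``peeling off'' the static part of the approximate system. First I would apply the preceding theorem: since~(\ref{eqn:spec}) is physically realizable for all $\varepsilon>0$, the reduced transfer function $\Phi_0(s)=H_0(sI-F_0)^{-1}G_0+K_0$ satisfies $\Phi_0(s)^\dagger J\Phi_0(s)=J$. Letting $s\to\infty$ gives $K_0^\dagger JK_0=J$; $K_0$ is then invertible, so also $K_0JK_0^\dagger=J$ and $JK_0^\dagger JK_0=K_0JK_0^\dagger J=I$. Next I would note that $K_0$ has the doubled-up block form $\left[\begin{smallmatrix}B_1&B_2\\B_2^\#&B_1^\#\end{smallmatrix}\right]$: from~(\ref{eqn:F0}) and~(\ref{eqn:specA}) we have $K_0=K-N_bF_d^{-1}G_b$, the set of matrices of this block form (as for $F,G,H,K$ in~(\ref{eqn:genModel})) is closed under sums, products and inversion, and $K$, $N_b$, $F_d$, $G_b$ each have this form. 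Hence $B\defeq K_0$ is a \emph{static Bogoliubov component} in the sense of Definition~2, i.e.\ a generalized static squeezer.

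Second, I would introduce the factorization $\Phi_0(s)=\Psi(s)\,B$, where $\Psi(s)\defeq\Phi_0(s)B^{-1}=H_0(sI-F_0)^{-1}(G_0B^{-1})+I$ is the transfer function of the linear quantum system $(F_0,\,G_0B^{-1},\,H_0,\,I)$. (Equivalently one may factor on the left, $\Phi_0=B\,\widehat\Psi$ with $\widehat\Psi(s)=(B^{-1}H_0)(sI-F_0)^{-1}G_0+I$; the two choices place the static component after, respectively before, the dynamical part, and either realizes~(\ref{eqn:specApprox}) as a system in series with $B$.) The system $(F_0,G_0B^{-1},H_0,I)$ again has the doubled-up structure; its feedthrough $I=\left[\begin{smallmatrix}I&0\\0&I\end{smallmatrix}\right]$ is of the form $\left[\begin{smallmatrix}S_0&0\\0&S_0^\#\end{smallmatrix}\right]$ with $S_0=I$ unitary; and it is $(J,J)$-unitary, since $\Psi(s)^\dagger J\Psi(s)=(B^\dagger)^{-1}\Phi_0(s)^\dagger J\Phi_0(s)B^{-1}=(B^\dagger)^{-1}JB^{-1}=J$ using $BJB^\dagger=J$.

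Third, I would conclude that $\Psi$ is physically realizable by applying the characterization of~\cite{ShP5a} to $(F_0,G_0B^{-1},H_0,I)$: its two conditions (the transfer function is $(J,J)$-unitary, and $K$ is of the block-unitary form) have just been checked, so only the minimality and eigenvalue hypotheses remain. Minimality of $(F_0,G_0B^{-1},H_0)$ is equivalent to that of $(F_0,G_0,H_0)$ since $B$ is invertible (an invertible input transformation preserves controllability and does not affect observability), so it is inherited from the perturbed system or else verified directly as in Remark~2; the condition $\lambda_i(F_0)+\lambda_j(F_0)\ne0$ concerns $F_0$ alone. Hence $\Psi$ is the transfer function of a physically realizable system, and $\Phi_0=\Psi B$ exhibits~(\ref{eqn:specApprox}) as a physically realizable system in series with the static Bogoliubov component $B=K_0$, as claimed.

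I expect the third step to be the main obstacle, since upgrading ``$(J,J)$-unitary with block-unitary feedthrough'' to ``physically realizable'' forces one to dispose of the side conditions of the~\cite{ShP5a} characterization. If the inherited-minimality argument is felt to be too slick, the robust alternative is to build the realizing data for $\Psi$ explicitly, taking $\Theta_0=J$, $N_0=H_0$, $S_0=I$ and $M_0=iJF_0+\frac{i}{2}H_0^\dagger JH_0$; by Definition~1 this is a legitimate realization exactly when $M_0=M_0^\dagger$ and $G_0B^{-1}=-JH_0^\dagger J$, i.e.\ when the Lyapunov-type identity $JF_0+F_0^\dagger J+H_0^\dagger JH_0=0$ (together with the companion relation for $G_0$) holds; the bulk of the work is then to verify these directly from~(\ref{eqn:spec})--(\ref{eqn:specApprox}) using $M=M^\dagger$, $SS^\dagger=I$, $F_d=-J(iM_d+\frac12N_b^\dagger JN_b)$, $H_0=N_a-N_bF_d^{-1}F_c$ and $K_0=K-N_bF_d^{-1}G_b$. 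As a consistency check on the \emph{shape} of the conclusion, observe that the rescaling $\overline{a}_2=a_2/\sqrt{\varepsilon}$ in~(\ref{eqn:specA}) multiplies the commutator of the fast mode by $1/\varepsilon$, which diverges as $\varepsilon\to0$; this blow-up is precisely why the eliminated system cannot itself be physically realizable and why a residual static squeezing transformation $B=K_0$ is left behind, in agreement with the stated decomposition.
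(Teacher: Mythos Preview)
Your decomposition $\Phi_0=\Psi\,B$ with $B=K_0$ and $\Psi$ having identity feedthrough is exactly the one the paper uses; what differs is how you argue that $\Psi$ (equivalently $(F_0,G_0B^{-1},H_0,I)$) is physically realizable. The paper does \emph{not} invoke Theorem~2 or the $(J,J)$-unitary characterization at all: it goes straight to your ``robust alternative'' and constructs $\tilde M,\tilde N,\tilde K$ explicitly (your $M_0,H_0,K_0$), then verifies by direct algebra that $\Sigma\tilde K^\#\Sigma=\tilde K$, that $J\tilde K^\dagger J\tilde K=I$, that $G_0+J\tilde N^\dagger J\tilde K=0$, and finally that $\tilde M=iJ(F_0+\tfrac12 J\tilde N^\dagger J\tilde N)$ is Hermitian. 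The long computations you anticipate as ``the bulk of the work'' are precisely what the paper carries out, using the resolvent identity for $(iM_d+\tfrac12 N_b^\dagger JN_b)^{-1}$ and $(-iM_d+\tfrac12 N_b^\dagger JN_b)^{-1}$ repeatedly.

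Your primary route through Theorem~2 and the \cite{ShP5a} characterization is conceptually cleaner, and your derivation of the Bogoliubov property of $K_0$ by letting $s\to\infty$ in $\Phi_0(s)^\dagger J\Phi_0(s)=J$ is neater than the paper's page of algebra for $J\tilde K^\dagger J\tilde K=I$. But the gap you flag is real and is the reason the paper avoids this route: Theorem~3 as stated carries no minimality or eigenvalue hypothesis on the approximate system, so Theorem~1 cannot be applied as a black box. Your suggestion that minimality is ``inherited from the perturbed system'' does not go through in general for singular perturbation reductions, and ``verified directly as in Remark~2'' simply defers the hypothesis to the reader rather than proving the theorem as stated. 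The paper's direct construction sidesteps all of this: once $\tilde M=\tilde M^\dagger$ and $G_0=-J\tilde N^\dagger J\tilde K$ are checked algebraically, Definition~1 applies with $\Theta=J$, $S=I$, and no minimality is needed. So your fallback is the actual proof; your primary approach, as written, proves only a weaker statement with extra side conditions.
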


The following lemma will be used in the proof of this result.

\begin{lemma} (See \cite{PET10Ba}.) A matrix R is of the form 
	$\left[ \begin{smallmatrix}
	R_1 & R_2 \\ R_2^\# & R_1^\# \end{smallmatrix} \right]$ if and only if 
	$R = \Sigma R^\# \Sigma$, where $\Sigma = \left[ 
	\begin{smallmatrix} 0 & I\\ I & 0 \end{smallmatrix} \right]$.
\end{lemma}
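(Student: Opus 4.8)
The plan is to prove both implications by a direct block computation, resting on two elementary facts: that the operation $(.)^\#$ applied to a matrix is entrywise complex conjugation, hence an involution with $(R^\#)^\# = R$ that commutes with any block partition; and that $\Sigma = \left[ \begin{smallmatrix} 0 & I \\ I & 0 \end{smallmatrix} \right]$ satisfies $\Sigma^2 = I$ and acts on a $2\times 2$ block matrix by interchanging the two block-rows when multiplied on the left and the two block-columns when multiplied on the right.

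First I would write $R = \left[ \begin{smallmatrix} R_{11} & R_{12} \\ R_{21} & R_{22} \end{smallmatrix} \right]$ in conformably-sized blocks and evaluate $\Sigma R^\# \Sigma$ in two stages. Left multiplication by $\Sigma$ swaps the block-rows of $R^\#$, and the subsequent right multiplication by $\Sigma$ swaps its block-columns, so the net effect sends the block in position $(i,j)$ to position $(3-i,3-j)$ while conjugating every entry. This gives $\Sigma R^\# \Sigma = \left[ \begin{smallmatrix} R_{22}^\# & R_{21}^\# \\ R_{12}^\# & R_{11}^\# \end{smallmatrix} \right]$.

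Next I would equate this expression with $R$ block by block, obtaining the four identities $R_{11} = R_{22}^\#$, $R_{12} = R_{21}^\#$, $R_{21} = R_{12}^\#$, and $R_{22} = R_{11}^\#$. By the involution property of $(.)^\#$ the last two are merely the conjugates of the first two, so the system collapses to the two independent relations $R_{22} = R_{11}^\#$ and $R_{21} = R_{12}^\#$. Writing $R_1 = R_{11}$ and $R_2 = R_{12}$ then shows that $R = \Sigma R^\# \Sigma$ forces $R = \left[ \begin{smallmatrix} R_1 & R_2 \\ R_2^\# & R_1^\# \end{smallmatrix} \right]$, which is the \emph{only if} direction. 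For the converse I would substitute the assumed form $R = \left[ \begin{smallmatrix} R_1 & R_2 \\ R_2^\# & R_1^\# \end{smallmatrix} \right]$ into the block formula for $\Sigma R^\# \Sigma$ and apply $(R_1^\#)^\# = R_1$ and $(R_2^\#)^\# = R_2$ to recover $R$ exactly.

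Since the argument is purely mechanical there is no genuine analytical difficulty; the only points demanding care are bookkeeping ones—keeping $(.)^\#$ (entrywise conjugation) distinct from the adjoint $(.)^\dagger$ used elsewhere in the paper, and correctly tracking the row-swap-then-column-swap action of the two $\Sigma$ factors so that diagonal and off-diagonal blocks are exchanged in exactly the right pattern.
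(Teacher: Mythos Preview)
Your argument is correct: the block computation $\Sigma R^\# \Sigma = \left[\begin{smallmatrix} R_{22}^\# & R_{21}^\# \\ R_{12}^\# & R_{11}^\# \end{smallmatrix}\right]$ is right, and equating blocks gives exactly the two independent constraints $R_{22}=R_{11}^\#$, $R_{21}=R_{12}^\#$, with the converse following by direct substitution. One cosmetic point: you have the labels of the two implications reversed --- in the biconditional ``$R$ is of the form $[\cdots]$ if and only if $R=\Sigma R^\#\Sigma$'', deducing the block form from the equation $R=\Sigma R^\#\Sigma$ is the \emph{if} direction, not the \emph{only if} direction --- but since you prove both implications this does not affect the validity.

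As for comparison with the paper: the paper does not actually prove this lemma. It is quoted from the reference \cite{PET10Ba} and used as a tool in the proof of Theorem~3 that immediately follows (the \texttt{proof} environment after the lemma belongs to that theorem, not to the lemma). Your direct block-swap verification is the standard elementary argument one would expect for this fact, so there is nothing to contrast.
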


\begin{proof}
We wish to show that there exists $\tilde{M}$,$\tilde{N}$, and 
$\tilde{K}$ with
\begin{eqnarray*}
	\tilde{M} &=& \begin{bmatrix} \tilde{M}_1 & \tilde{M}_2 \\
		\tilde{M}_2^\# & \tilde{M}_1^\# \end{bmatrix}; 
		\qquad \tilde{M}^\dagger = \tilde{M}; \\
	\tilde{N} &=& \begin{bmatrix} \tilde{N}_1 & \tilde{N}_2 \\
		\tilde{N}_2^\# & \tilde{N}_1^\# \end{bmatrix};  \\
	\tilde{K} &=& \begin{bmatrix} \tilde{K}_1 & \tilde{K}_2 \\
		\tilde{K}_2^\# & \tilde{K}_1^\# \end{bmatrix}; 
		\qquad \tilde{K}^{-1} = J \tilde{K}^\dagger J ,
\end{eqnarray*}
such that,
\begin{eqnarray*}
	F_0 &=& -iJ\tilde{M} - \frac{1}{2}J\tilde{N}^\dagger J\tilde{N}; \\
	G_0 &=& -J \tilde{N}^\dagger J \tilde{K}; \\
	H_0 &=& \tilde{N}; \\
	K_0 &=& \tilde{K}.
\end{eqnarray*}

Let
\begin{eqnarray}
	\tilde{M} &=& M_a  \nonumber \\
	&& {} - i \frac{1}{2} M_b 
	(i M_d + \frac{1}{2}N_b^\dagger J N_b)^{-1} M_c \nonumber \\
	&& {} + i \frac{1}{2} M_b 
	(-i M_d + \frac{1}{2}N_b^\dagger J N_b)^{-1} M_c \nonumber \\
	&& {} - \frac{1}{4} M_b (i M_d + \frac{1}{2}N_b^\dagger J N_b)^{-1} 
	N_b^\dagger J N_a \nonumber \\
	&& {} - \frac{1}{4} M_b (-i M_d + \frac{1}{2}N_b^\dagger J N_b)^{-1} 
	N_b^\dagger J N_a \nonumber \\
	&& {} - \frac{1}{4} N_a^\dagger J N_b  
	(i M_d + \frac{1}{2}N_b^\dagger J N_b)^{-1} M_c \nonumber \\
	&& {} - \frac{1}{4} N_a^\dagger J N_b 
	(-i M_d + \frac{1}{2}N_b^\dagger J N_b)^{-1} M_c \nonumber \\
	&& {} +i \frac{1}{8} N_a^\dagger J N_b  
	(i M_d + \frac{1}{2}N_b^\dagger J N_b)^{-1} N_b^\dagger J N_a 
	\nonumber \\
	&& {} -i \frac{1}{8} N_a^\dagger J N_b 
	(-i M_d + \frac{1}{2}N_b^\dagger J N_b)^{-1} N_b^\dagger J N_a 
	\nonumber \\
	\tilde{N} &=& N_a 
	- N_b 
	(i M_d + \frac{1}{2}N_b{}^\dagger J N_b)^{-1} 
	(i M_c + \frac{1}{2}N_b{}^\dagger J N_a)
	\nonumber \\
	\tilde{K} &=& K - N_b 
	(i M_d + \frac{1}{2}N_b{}^\dagger J N_b)^{-1} 
	N_b{}^\dagger J K
	\label{eqn:specSoln}
\end{eqnarray}

Consider $K_0 =\tilde{K}$ first. In order to apply Lemma 1, 
we wish to show that $\Sigma \tilde{K}^\# 
\Sigma = \tilde{K}$, and that $J \tilde{K}^\dagger J \tilde{K} = I$. Indeed,
\begin{eqnarray*}
	\Sigma \tilde{K}^\# \Sigma &=& \Sigma K^\# \Sigma 
	- \Sigma N_b^\# \Sigma 
	\left( -i\Sigma M_d^\# \Sigma  \right. 
	\\ && 
	\qquad \left. {} + \frac{1}{2} 
	(\Sigma N_b{}^\# \Sigma)^\dagger \Sigma J \Sigma 
	\Sigma N_b^\# \Sigma \right) ^{-1} \\
	&& \qquad {} \times 
	(\Sigma N_b{}^\# \Sigma )^\dagger \Sigma J \Sigma 
	\Sigma K^\# \Sigma \\
	&=& K - N_b \left( -i M_d + \frac{1}{2} N_b{}^\dagger (-J) N_b
	\right)^{-1} 
	\\ && \qquad {} \times
	N_b{}^\dagger (- J) K \\
	&=& K - N_b (i M_d + \frac{1}{2} N_b{}^\dagger J N_b)^{-1} 
	N_b{}^\dagger J K \\
	&=& \tilde{K}; \\
\end{eqnarray*}

\begin{eqnarray*}
	\lefteqn{J \tilde{K}^\dagger J \tilde{K} =} \\
	&& J \left( K^\dagger - K^\dagger J N_b 
	(-i M_d + \frac{1}{2}N_b{}^\dagger J N_b)^{-1} 
	N_b{}^\dagger \right) \\
	&& \qquad \times
	J
	\left( K - N_b 
	(i M_d + \frac{1}{2}N_b{}^\dagger J N_b)^{-1} 
	N_b{}^\dagger J K \right) \\
&=& 
	J K^\dagger
	\left( I - J N_b 
	(-i M_d + \frac{1}{2}N_b{}^\dagger J N_b)^{-1} 
	N_b{}^\dagger \right) \\
	&& \qquad \times
	\left( I - J N_b 
	(i M_d + \frac{1}{2}N_b{}^\dagger J N_b)^{-1} 
	N_b{}^\dagger \right) JK \\
&=& 
	JK^\dagger \left( I
	- J N_b 
	(-i M_d + \frac{1}{2}N_b{}^\dagger J N_b)^{-1} 
	N_b{}^\dagger \right. \\
	&& \qquad - J N_b 
	(i M_d + \frac{1}{2}N_b{}^\dagger J N_b)^{-1} 
	N_b{}^\dagger \\
	&& \qquad + J N_b 
	(-i M_d + \frac{1}{2}N_b{}^\dagger J N_b)^{-1} 
	N_b{}^\dagger 
	J N_b \\
	&& \left. \qquad \times
	(i M_d + \frac{1}{2}N_b{}^\dagger J N_b)^{-1} 
	N_b{}^\dagger 
	\right) JK \\
&=& 
	JK^\dagger JK \\
	&& \qquad - JK^\dagger J N_b \left(
	(-i M_d + \frac{1}{2}N_b{}^\dagger J N_b)^{-1} 
	\right.
	\\ &&
	\qquad {} + (i M_d + \frac{1}{2}N_b{}^\dagger J N_b)^{-1} 
	\\ && 
	\qquad {} - (-i M_d + \frac{1}{2}N_b{}^\dagger J N_b)^{-1} 
	N_b^\dagger J N_b 
	\\ &&
	\qquad \left. {} \times (i M_d + \frac{1}{2}N_b{}^\dagger J N_b)^{-1} 
	\right)	N_b{}^\dagger JK \\
&=& 
	JK^\dagger JK \\
	&& \qquad - JK^\dagger J N_b 
	(-i M_d + \frac{1}{2}N_b{}^\dagger J N_b)^{-1} 
	\\ && 
	\qquad \times \left(
	i M_d + \frac{1}{2}N_b{}^\dagger J N_b -i M_d \right.
	\\ && 
	\qquad \left. {} + \frac{1}{2}N_b{}^\dagger J N_b
	- N_b^\dagger J N_b \right) \\
	&& \qquad \times (i M_d + \frac{1}{2}N_b{}^\dagger J N_b)^{-1} 
	N_b{}^\dagger JK \\
	&=& JK^\dagger JK \\
	&=& 
	\begin{bmatrix} I & 0 \\ 0 & -I \end{bmatrix}
	\begin{bmatrix} S^\dagger & 0 \\ 0 & S^T \end{bmatrix}
	\begin{bmatrix} I & 0 \\ 0 & -I \end{bmatrix}
	\begin{bmatrix} S & 0 \\ 0 & S^\# \end{bmatrix} \\
	&=& \begin{bmatrix} S^\dagger S & 0 \\ 0 & S^T S^\# \end{bmatrix} \\
	&=& I.
\end{eqnarray*}

We now consider $\tilde{N} = H_0$. As with $\tilde{K}$ it is straightforward to 
show that $\Sigma \tilde{N}^\# \Sigma = \tilde{N}$. We wish to show that
$G_0 = -J \tilde{N}^\dagger JK$, i.e. that 
$G_0  + J \tilde{N}^\dagger JK = 0$. Indeed,

\begin{eqnarray*}
	\lefteqn{G_0 + J \tilde{N}^\dagger JK =} 
	\\ &&  
	{} - J K_a{}^\dagger J K
	+ J(i M_b + \frac{1}{2}N_a{}^\dagger J N_b)
	\\ &&
	\qquad \times (i M_d + \frac{1}{2}N_b{}^\dagger J N_b)^{-1} 
	N_b{}^\dagger J K 
	\\ &&  
	+ J \left(N_a^\dagger - 
	(-i M_b + \frac{1}{2}N_a{}^\dagger J N_b) \right.
	\\ &&
	\qquad \times \left.
	(-i M_d + \frac{1}{2}N_b{}^\dagger J N_b)^{-1} 
	N_b^\dagger
	\right)  
	\\ && 
	\qquad \times J \left( K - N_b 
	(i M_d + \frac{1}{2}N_b{}^\dagger J N_b)^{-1} 
	N_b{}^\dagger J K \right) 
\end{eqnarray*}

\begin{eqnarray*}
&=& 
	{} - J N_a{}^\dagger J K 
	\\ && 
	{}+ J i M_b 
	(i M_d + \frac{1}{2}N_b{}^\dagger J N_b)^{-1} 
	N_b{}^\dagger J K 
	\\ && 
	{} + J \frac{1}{2}N_a{}^\dagger J N_b
	(i M_d + \frac{1}{2}N_b{}^\dagger J N_b)^{-1} 
	N_b{}^\dagger J K 
	\\ && 
	{} + J N_a^\dagger JK 
	\\ && 
	{} - J N_a^\dagger J N_b 
	(i M_d + \frac{1}{2}N_b{}^\dagger J N_b)^{-1} 
	N_b{}^\dagger J K 
	\\ && 
	{} + J i M_b
	(-i M_d + \frac{1}{2}N_b{}^\dagger J N_b)^{-1} 
	N_b^\dagger J K 
	\\ &&
	{} - J \frac{1}{2}N_a{}^\dagger J N_b
	(-i M_d + \frac{1}{2}N_b{}^\dagger J N_b)^{-1} 
	N_b^\dagger J K 
	\\ && 
	{} - J i M_b
	(-i M_d + \frac{1}{2}N_b{}^\dagger J N_b)^{-1} 
	N_b^\dagger J N_b
	\\ &&
	\qquad \times
	(i M_d + \frac{1}{2}N_b{}^\dagger J N_b)^{-1} 
	N_b{}^\dagger J K 
	\\ && 
	{} + J \frac{1}{2}N_a{}^\dagger J N_b
	(-i M_d + \frac{1}{2}N_b{}^\dagger J N_b)^{-1} 
	N_b^\dagger J 
	\\ &&
	\qquad \times
	N_b (i M_d + \frac{1}{2}N_b{}^\dagger J N_b)^{-1} 
	N_b{}^\dagger J K 
\end{eqnarray*}

\begin{eqnarray*}
&=& 
	+ J i M_b 
	(i M_d + \frac{1}{2}N_b{}^\dagger J N_b)^{-1} 
	N_b{}^\dagger J K 
	\\ && 
	{} + J i M_b
	(-i M_d + \frac{1}{2}N_b{}^\dagger J N_b)^{-1} 
	N_b^\dagger J K 
	\\ && 
	{} - J i M_b
	(-i M_d + \frac{1}{2}N_b{}^\dagger J N_b)^{-1} 
	N_b^\dagger J N_b 
	\\ &&
	\qquad \times
	(i M_d + \frac{1}{2}N_b{}^\dagger J N_b)^{-1} 
	N_b{}^\dagger J K 
	\\ && 
	{} - J \frac{1}{2}N_a{}^\dagger J N_b
	(i M_d + \frac{1}{2}N_b{}^\dagger J N_b)^{-1} 
	N_b{}^\dagger J K 
	\\ && 
	{} - J \frac{1}{2}N_a{}^\dagger J N_b
	(-i M_d + \frac{1}{2}N_b{}^\dagger J N_b)^{-1} 
	N_b^\dagger J K 
	\\ && 
	{} + J \frac{1}{2}N_a{}^\dagger J N_b
	(-i M_d + \frac{1}{2}N_b{}^\dagger J N_b)^{-1} 
	\\ &&
	\qquad \times
	N_b^\dagger J 
	N_b (i M_d + \frac{1}{2}N_b{}^\dagger J N_b)^{-1} 
	N_b{}^\dagger J K	
\end{eqnarray*}

\begin{eqnarray*}
	&=& 
	{} + J i M_b \left( 
	(i M_d + \frac{1}{2}N_b{}^\dagger J N_b)^{-1} 
	\right. 
	\\ && 
	\qquad {} + (-i M_d + \frac{1}{2}N_b{}^\dagger J N_b)^{-1} 
	\\ && 
	\qquad {} - (-i M_d + \frac{1}{2}N_b{}^\dagger J N_b)^{-1} 
	N_b^\dagger J N_b 
	\\ &&
	\qquad \qquad \times
	\left. (i M_d + \frac{1}{2}N_b{}^\dagger J N_b)^{-1} 
	\right) N_b{}^\dagger J K 
	\\ &&
	{} - J \frac{1}{2}N_a{}^\dagger J N_b i
	\\ &&
	{} \times \left(
	(i M_d + \frac{1}{2}N_b{}^\dagger J N_b)^{-1} \right. 
	- (-i M_d + \frac{1}{2}N_b{}^\dagger J N_b)^{-1} 
	\\ && 
	\qquad {} + (-i M_d + \frac{1}{2}N_b{}^\dagger J N_b)^{-1} 
	\\ &&
	\qquad \times
	\left. N_b^\dagger J 
	N_b (i M_d + \frac{1}{2}N_b{}^\dagger J N_b)^{-1} 
	\right) N_b{}^\dagger J K \\
&=& 0.
\end{eqnarray*}

Finally, we set
$$ \tilde{M} = iJ \left( F_0 + \frac{1}{2} J \tilde{N}^\dagger 
J \tilde{N} \right). $$

After simplification along similar lines to that shown for the previous 
equations, this yields the expression for $\tilde{M}$ given in 
(\ref{eqn:specSoln}) which it is straightforward to verify is hermitian. 

This completes the proof of the theorem.
\end{proof}

\section{Illustrative Example}\label{sec:ex}
The following example from quantum optics demonstrates our main result. The 
example is similar to that in \cite{PET09A}. However, here we consider a 
cavity coupled to a squeezer as shown in Figure 1. Unlike in \cite{PET09A}, 
the evolution of this system is in terms of both annihilation and creation 
operators.

%FIGURE
\begin{figure}[h]
\psfrag{K1}{$K_1$}
\psfrag{K2}{$K_2$}
\psfrag{u1}{$u_1$}
\psfrag{u2}{$u_2$}
\psfrag{y1}{$y_1$}
\psfrag{y2}{$y_2$}
\psfrag{a1}{$a_1$}
\psfrag{a2}{$a_2$}
\psfrag{gamma}{$\gamma$}
\includegraphics[trim = 0mm 0mm 0mm -10mm, scale=0.8]{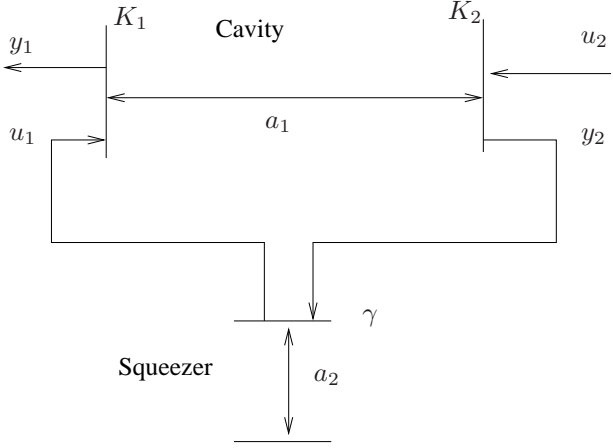}
\caption{A linear optical quantum system.}
\label{fig:1}
\end{figure}

Here,
$K_1$ and $K_2$ are the coupling parameters of the first cavity, $\gamma$ is 
the coupling parameter of the squeezer and $\chi$ is the squeezing parameter. 
The system under consideration can be described by QSDE of the 
form (\ref{eqn:genModel}) with:
\begin{eqnarray}
F &=& 
\left[ \begin{smallmatrix}
%	- \frac{K_1 + K_2}{2} - \sqrt{K_1 K_2} & 
	- \frac{1}{2}(\sqrt{K_1} + \sqrt{K_2})^2 & 
	- \sqrt{K_1 \gamma} & 
	0 & 
	0 \\
	- \sqrt{K_2 \gamma} & 
	- \frac{\gamma}{2} & 
	0 & 
	- \chi \\
	0 & 
	0 & 
	- \frac{1}{2}(\sqrt{K_1} + \sqrt{K_2})^2 & 
	- \sqrt{K_1 \gamma} \\ 
	0 & 
	\chi^* &
	- \sqrt{K_2 \gamma} & 
	- \frac{\gamma}{2} \\ 
\end{smallmatrix} \right]; \nonumber \\
G &=& 
\left[ \begin{smallmatrix}
	- \sqrt{K_1} - \sqrt {K_2} & 0 \\
	- \sqrt{\gamma} & 0 \\
	0 & - \sqrt{K_1} - \sqrt {K_2} \\
	0 & - \sqrt{\gamma} \\
\end{smallmatrix} \right];
\nonumber \\
H &=& 
\left[ \begin{smallmatrix}
	(\sqrt{K_1} + \sqrt {K_2}) & 
	\sqrt{\gamma} & 0 & 0 \\
	0 & 0 & (\sqrt{K_1} + \sqrt {K_2}) & 
	\sqrt{\gamma} \\
\end{smallmatrix} \right];
\nonumber \\
K &=& 
\left[ \begin{smallmatrix}
	1 & 0 \\
	0 & 1 \\
\end{smallmatrix} \right]. 
\label{eqn:ex1}
\end{eqnarray} 

Now suppose that the dynamics of the squeezer are at a much higher frequency 
than that of interest. We can apply a singular perterbation approximation by 
letting $\gamma = \frac{1}{\varepsilon}\tilde{\gamma}$ and
$\chi = \frac{1}{\varepsilon}\tilde{\chi}$. After the change of variables
$\overline{a}_2 = \frac{1}{\varepsilon}a_2$ we obtain a system of the form 
(\ref{eqn:sys}). In fact, this system belongs to the special class in which 
$M$, $N$, $S$, and $\Theta$ are of the form (\ref{eqn:spec}) as follows:

\begin{eqnarray*}
	M_a &=&  \left[ \begin{smallmatrix}
		0 & 0 \\ 0 & 0 
	\end{smallmatrix} \right]; \nonumber \\
	M_b &=&  \left[ \begin{smallmatrix}
		\frac{i}{2}(\sqrt{K_1} - \sqrt{K_2})\gamma &
		0 \\ 0 &
		-\frac{i}{2}(\sqrt{K_1} - \sqrt{K_2})\gamma
	\end{smallmatrix} \right]; \nonumber \\
	M_c &=&  \left[ \begin{smallmatrix}
		- \frac{i}{2}(\sqrt{K_1} - \sqrt{K_2})\gamma &
		0 \\ 0 &
		\frac{i}{2}(\sqrt{K_1} - \sqrt{K_2})\gamma
	\end{smallmatrix} \right]; \nonumber \\
	M_d &=&  \left[ \begin{smallmatrix}
		0 & i \chi \\ -i \chi^* & 0
	\end{smallmatrix} \right]; \nonumber \\
	N_a &=&  \left[ \begin{smallmatrix}
		(\sqrt{K_1} + \sqrt{K_2}) &
		0 \\ 0 &
		(\sqrt{K_1} + \sqrt{K_2})
	\end{smallmatrix} \right]; \nonumber \\
	N_b &=&  \left[ \begin{smallmatrix}
		\sqrt{\gamma} &
		0 \\ 0 &
		\sqrt{\gamma}
	\end{smallmatrix} \right]; \nonumber \\
	S &=& I.
\end{eqnarray*}

Applying the singular perturbation approximation, we obtain a system 
of the form (\ref{eqn:sysApprox}), and from (\ref{eqn:specApprox}) we 
have:
\begin{eqnarray}
	F_0 &=& - \frac{1}{2}(\sqrt{K_1} + \sqrt{K_2})^2 I +
		\gamma \sqrt{K_1 K_2} \left[ \begin{smallmatrix}
			\frac{\gamma}{2} & - \chi \\ 
			- \chi^* & \frac{\gamma}{2}
	\end{smallmatrix} \right]^{-1} \nonumber \\
	G_0 &=& - \frac{1}{2}(\sqrt{K_1} + \sqrt{K_2}) I +
		\gamma \sqrt{K_2} \left[ \begin{smallmatrix}
			\frac{\gamma}{2} & - \chi \\ 
			- \chi^* & \frac{\gamma}{2}
	\end{smallmatrix} \right]^{-1} \nonumber \\
	H_0 &=& \frac{1}{2}(\sqrt{K_1} + \sqrt{K_2}) I -
		\gamma \sqrt{K_1} \left[ \begin{smallmatrix}
			\frac{\gamma}{2} & - \chi \\ 
			- \chi^* & \frac{\gamma}{2}
	\end{smallmatrix} \right]^{-1} \nonumber \\
	K_0 &=& I - \gamma \left[ \begin{smallmatrix}
			\frac{\gamma}{2} & - \chi \\ 
			- \chi^* & \frac{\gamma}{2}
	\end{smallmatrix} \right]^{-1} \label{eqn:ex3} 
\end{eqnarray}

Finally, since the system described by (\ref{eqn:ex1}) satisfies the 
conditions for Theorem 3, the approximate system described by (\ref{eqn:ex3}) 
is equivalent to a physically realizable system in series with a static 
Bogoliubov component (static squeezer).

This can be verified by obtaining $\tilde{M}$, $\tilde{N}$ and 
$\tilde{K}$ from (\ref{eqn:specSoln}) and confirming that the physically 
realizable system obtained from substituting $M = \tilde{M}$, $N = \tilde{N}$ 
and $S = I$ into (\ref{eqn:realizable}) and the static Bogoliubov 
component described by $\tilde{K}$ combine to form (\ref{eqn:ex3}).

\section{Conclusion} \label{sec:conc}

In this paper, we have considered singular perturbation approximations for the 
general class of quantum linear systems described by both annihilation and 
creation operators. Two main results were presented. We first considered 
a general singular perturbation approximation and obtained a result 
(relevant to physical realizability) relating to the J-J unitary property of 
the transfer function of the approximate system. We then considered the special 
case in which the Hamiltonian and coupling operators are singularly perturbed. 
While in general the system obtained from the singular perturbation 
approximation for the special case is not necessarily physically realizable, 
it is equivalent to a physically realizable system in series with a 
static Bogoliubov component (generalized static squeezer).

\bibliography{irpnew}  
\bibliographystyle{IEEEtran}

\end{document}